\newcommand{\be}{\begin{equation}}
\newcommand{\ee}{\end{equation}}
\newcommand{\bea}{\begin{eqnarray}}
\newcommand{\eea}{\end{eqnarray}}
\newcommand{\beas}{\begin{eqnarray*}}
\newcommand{\eeas}{\end{eqnarray*}}
\newtheorem{theorem}{Theorem}[section]
\newtheorem{definition}[theorem]{Definition}
\newtheorem{remark}[theorem]{Remark}
\newtheorem{example}[theorem]{Example}
\newtheorem{examples}[theorem]{Examples}
\newtheorem{foo}[theorem]{Remarks}
\newenvironment{proof}{\addvspace{\medskipamount}\par\noindent{\it Proof}.}
{\unskip\nobreak\hfill$\Box$\par\addvspace{\medskipamount}}
\newcommand{\brak}[1]{\left(#1\right)}    
\newcommand{\crl}[1]{\left\{#1\right\}}   
\newcommand{\edg}[1]{\left[#1\right]}     
\newcommand{\p}{\mathbb{P}}
\newcommand{\q}{\mathbb{Q}}
\newcommand{\E}[1]{{\mathbb{E}}\left[#1\right]}
\newcommand{\EQ}[1]{{\mathbb{E}}_{\mathbb{Q}}\left[#1\right]}
\newcommand{\abs}[1]{\left|#1\right|}     
\DeclareMathOperator{\argmax}{arg\,max}
\begin{document}

\title{Optimal consumption and investment\\
in incomplete markets with general constraints\thanks{The paper was written during periods when
the authors could work together at the Universit\'e Rennes 1 or Princeton University.
The hospitality of both institutions is greatly appreciated. We also thank an anonymous
referee for valuable comments.}}
\author{Patrick Cheridito\thanks{Partially supported by NSF Grant DMS-0642361}\\
Princeton University\\ Princeton,
NJ 08544, USA \and Ying Hu\thanks{Partially supported by
funds from the Marie Curie ITN Grant, ``Controlled Systems'', GA no.213841/2008.}
\\ IRMAR,
Universit\'e Rennes 1\\ 35042 Rennes Cedex,
France.}

\date{}

\maketitle

{\bf Abstract.} We study an optimal consumption and investment problem
in a possibly incomplete market with general, not necessarily convex,
stochastic constraints. We provide explicit solutions
for investors with exponential, logarithmic as well as power utility and
show that they are unique if the constraints are convex.
Our approach is based on martingale methods that rely on
results on the existence and uniqueness of solutions to
BSDEs with drivers of quadratic growth.

\section{Introduction}

We consider an investor receiving stochastic income who can invest in a
financial market. The question is how to optimally consume and invest if
utility is derived from intermediate consumption and the level of
remaining wealth at some final time $T$.
More specifically, we assume our investor receives income at rate $e_t$ and
a lump sum payment $E$ at the final time. The investor chooses a rate of consumption $c_t$ and
an investment policy so as to maximize the expectation
$$
\E{\int_0^T \alpha e^{-\beta t} u(c_t) dt + e^{-\beta T} u(X_T + E)},
$$
where $\alpha$ and $\beta$ are constants,
$u : \mathbb{R} \to \mathbb{R} \cup \crl{- \infty}$ is a concave utility function
and $X_T$ is his/her wealth immediately before receiving
the lump sum payment $E$. There exists an extensive literature on problems of this form;
see for instance, Karatzas and Shreve \cite{KS} for an overview.

The novelty of this paper is that we put general, not necessarily convex,
stochastic constraints on consumption and investment. We provide
explicit solutions for investors with exponential, logarithmic and power utility
in terms of solutions to BSDEs with drivers of quadratic growth.
Our approach is based on an extension of the arguments of Hu et al. \cite{HIM}, where
investment problems without intermediate consumption are studied.
To every admissible strategy we associate a utility process, which we show to always be
a supermartingale and a martingale if and only if the strategy is optimal. This method
relies on results from Kobylanski \cite{Kobylanski} and Morlais \cite{Morlais}
on the existence and properties of solutions to BSDEs with drivers of quadratic growth.
We formulate constraints on consumption and investment in terms of subsets of predictable processes and use
conditional analysis results from Cheridito et al. \cite{CKV} to obtain the existence of
optimal strategies. For related results in a slightly different setup, see Nutz \cite{Nutz},
where dynamic programming is used to derive the Bellman equation for power utility
maximization in general semimartingale
models with stochastic constraints on investment.

The structure of the paper is as follows: Section \ref{sec:model} introduces
the model. In Section \ref{sec:cara} we discuss the case of constant absolute risk aversion,
corresponding to exponential utility functions. Section \ref{sec:crra} treats the case of
constant relative risk aversion, which is covered by logarithmic and power utility functions.
The specification of the constraints and the definition of admissible strategies will
be slightly different from case to case. Section \ref{sec:conclusion} concludes with a discussion of the
assumptions and potential generalizations.

\setcounter{equation}{0}
\section{The model}
\label{sec:model}

Let $T \in \mathbb{R}_+$ be a finite time horizon and $(W_t)_{0 \le t \le T}$
an $n$-dimensional Brownian motion on a probability
space $(\Omega, {\cal F}, \p)$. Denote by $({\cal F}_t)$ the
augmented filtration generated by $(W_t)$. We consider a financial market
consisting of a money market and $m \le n$ stocks. Money can be lent to and
borrowed from the money market at a constant interest rate
$r \ge 0$ and the stock prices follow the dynamics
$$
\frac{dS^i_t}{S^i_t} = \mu^i_t dt + \sigma^i_t dW_t, \quad S^i_0 > 0, \quad i =1 , \dots, m,
$$
for bounded predictable processes $\mu^i_t$ and $\sigma^i_t$ taking values in
$\mathbb{R}$ and $\mathbb{R}^{1 \times n}$, respectively. If $m < n$, the
stocks do not span all uncertainty and the market is incomplete even if there
are no constraints.

Consider an investor with initial wealth $x \in \mathbb{R}$ receiving income at
a predictable rate $e_t$ and an ${\cal F}_T$-measurable lump sum payment
$E$ at time $T$ who can consume at intermediate times and invest in the
financial market. If the investor consumes at a predictable rate $c_t$ and
invests according to a predictable trading strategy $\pi_t$ taking
values in $\mathbb{R}^{1 \times m}$, where $\pi^i_t$ is
the amount of money invested in stock $i$ at time $t$,
his/her wealth evolves like
$$
X_t = x+ \int_0^t \brak{X_s - \sum_{i=1}^m \pi^i_s} r ds +
\sum_{i=1}^m \int_0^t \frac{\pi^i_s}{S^i_s} dS^i_s + \int_0^t (e_s-c_s) ds.
$$
Denote by $\sigma_t$ the matrix with rows $\sigma^i_t$, $i = 1, \dots, m$.
Assume that $\sigma \sigma^T$ is invertible $\nu \otimes \p$-almost everywhere, where
$\nu$ is the Lebesgue measure on $[0,T]$, and the process
$$
\theta = \sigma^T (\sigma \sigma^T)^{-1} (\mu - r1)
$$
is bounded. Then for $p = \pi \sigma$, one can write
\be \label{XSDE}
X^{(c,p)}_t = x + \int_0^t X^{(c,p)}_s r ds +
\int_0^t p_t [dW_t + \theta_t dt] + \int_0^t (e_s-c_s) ds.
\ee
Note that if
$$
\int_0^T (|e_t| + |c_t| + |p_t|^2) dt < \infty \quad \p \mbox{-almost surely,}
$$
where $|.|$ denotes the Euclidean norm on $\mathbb{R}^{1 \times n}$, then
$$
\int_0^t p_t [dW_t + \theta_t dt] + \int_0^t (e_s-c_s) ds
$$
is a continuous stochastic process, and it follows that equation \eqref{XSDE} has a unique continuous
solution $X^{(c,p)}$; see for instance Remark A.2 in Cheridito et al. \cite{CHM}.

We assume our agent chooses $c$ and $\pi$ so as to maximize
\be \label{opt}
\E{\int_0^T \alpha e^{-\beta t} u(c_t) dt + e^{-\beta T} u \brak{X^{(c,p)}_T + E}}
\ee
for given constants $\alpha > 0$, $\beta \in \mathbb{R}$ and
a concave function $u : \mathbb{R} \to \mathbb{R} \cup \crl{-\infty}$.
The specific cases we will discuss are:\\
\hspace*{5mm} $\bullet$ $u(x) = - \exp(-\gamma x)$ for $\gamma > 0$\\
\hspace*{5mm} $\bullet$ $u(x) = \log(x)$\\
\hspace*{5mm} $\bullet$ $u(x) = x^{\gamma}/\gamma$ for $\gamma \in (-\infty,0) \cup (0,1)$.\\
As usual, for $\gamma > 0$, we understand $x^{\gamma}/\gamma$ to be $-\infty$ on $(-\infty,0)$
while $\log(x)$ and $x^{\gamma}/\gamma$ for $\gamma < 0$ are meant to be $-\infty$ on
$(-\infty ,0]$.

To formulate consumption and investment constraints we introduce non-empty subsets
$C \subset {\cal P}$ and $Q \subset {\cal P}^{1 \times m}$, where ${\cal P}$ denotes the
set of all real-valued predictable processes $(c_t)_{0 \le t \le T}$ and
${\cal P}^{1 \times m}$ the set of all predictable processes $(\pi_t)_{0 \le t \le T}$ with values in
$\mathbb{R}^{1 \times m}$. In Section \ref{sec:cara} we do not put restrictions on
consumption and just require the investment strategy $\pi$ to belong to $Q$. In Section \ref{sec:crra}
consumption and investment will be of the form $c = \tilde{c} X$
and $\pi = \tilde{\pi} X$, respectively, and we will require $\tilde{c}$ to be in $C$ and
$\tilde{\pi}$ in $Q$.

Note that the expected value \eqref{opt} does not change if $(c,p)$ is
replaced by a pair $(c',p')$ which is equal $\nu \otimes \p$-a.e.
So we identify predictable processes that agree
$\nu \otimes \p$-a.e. and use the following concepts from Cheridito et al. \cite{CKV}:
We call a subset $A$ of ${\cal P}^{1 \times k}$ {\bf sequentially closed} if
it contains every process $a$ that is the $\nu \otimes \p$-a.e. limit
of a sequence $(a^n)_{n \ge 1}$ of processes in $A$. We call it
${\cal P}$-{\bf stable} if it contains $1_B a + 1_{B^c} a'$ for all $a,a' \in A$ and
every predictable set $B \subset [0,T] \times \Omega$. We say $A$ is
${\cal P}$-{\bf convex} if it contains $\lambda a + (1-\lambda) a'$
for all $a,a' \in A$ and every process $\lambda \in {\cal P}$ with values in $[0,1]$.
In the whole paper we work with the following\\[3mm]
\hspace*{7mm}{\bf Standing assumption} \quad $C$ and $Q$ are sequentially closed and ${\cal P}$-stable.\\[3mm]
This will allow us to show existence of optimal strategies. If, in addition,
$C$ and $Q$ are ${\cal P}$-convex, the optimal strategies will be unique.
Note that $P = \crl{\pi \sigma : \pi \in Q}$ is a ${\cal P}$-stable subset
of ${\cal P}^{1 \times n}$, which, since we
assumed $\sigma \sigma^T$ to be invertible for $\nu \otimes \p$-almost all $(t,\omega)$,
is ${\cal P}$-convex if and only if $Q$ is. Moreover, it follows from \cite{CKV}
that $P$ is  sequentially closed.

For a process $q$ in ${\cal P}^{1 \times n}$, we denote by ${\rm dist}(q,P)$ the
predictable process
$$
{\rm dist}(q,P) := \mathop{\rm ess\,inf}_{p \in P} |q-p|,
$$
where ess\,inf denotes the greatest lower bound with respect to the $\nu \otimes \p$-a.e. order.
It is shown in \cite{CKV} that there exists a
process $p \in P$ satisfying $|q-p| = {\rm dist}(q,P)$ and that it is unique
(up to $\nu \otimes \p$-a.e. equality) if $P$ is ${\cal P}$-convex.
We denote the set of all these processes by $\Pi_P(q)$.

By ${\cal P}^{1 \times n}_{\rm BMO}$ we denote the processes
$Z \in {\cal P}^{1 \times n}$
for which there exists a constant $D \ge 0$ such that
$$
\E{\int_{\tau}^T |Z_t|^2 dt \mid {\cal F}_{\tau}} \le D \quad
\mbox{for all stopping times } \tau \le T.
$$
For every $Z \in {\cal P}^{1 \times n}_{\rm BMO}$,
$\int_0^. Z_s dW_s$ is a BMO-martingale and
${\cal E}(Z \cdot W)_t$, $0 \le t \le T$, a positive martingale.
Moreover, if $Z,V$ belong to ${\cal P}^{1 \times n}_{\rm BMO}$, then $Z$ is also
in ${\cal P}^{1 \times n}_{\rm BMO}$ with respect to the Girsanov transformed measure
$$
\q = {\cal E}(V \cdot W)_T \cdot \p;
$$
see for instance, Kazamaki \cite{Kaz}.

\setcounter{equation}{0}
\section{CARA or exponential utility}
\label{sec:cara}

We first assume that our investor has constant absolute risk aversion
$- u''(x)/u'(x) = \gamma > 0$. Then, up to affine transformations,
the utility function $u$ is of the exponential form
$$
u(x) = - \exp(-\gamma x).
$$
Here we do not constrain consumption, that is, $C = {\cal P}$, and we assume
that the set $P$ of possible investment strategies contains at least one
bounded process $\bar{p}$. Moreover, we assume that the rate of income
$e$ and the final payment $E$ are both bounded.

Define the bounded positive function $h$ on $[0,T]$ by
$$
h(t) = 1/(1 + T -t) \quad \mbox{if} \quad r = 0
$$
and
$$
h(t) = \frac{r}{1 -(1-r) \exp(-r (T-t))}  \quad \mbox{if} \quad r > 0.
$$
Note that in both cases $h$ solves the quadratic ODE
$$
h'(t) = h(t)(h(t) -r), \quad h(T) = 1.
$$

\begin{definition}
If $u(x) = - \exp(-\gamma x)$, an admissible strategy consists of a pair
$(c,p) \in {\cal P} \times P$
such that $\int_0^T (|c_t| + |p_t|^2) dt < \infty$ $\p$-a.s.,
$$
\exp\brak{- \gamma h(t) X^{(c,p)}_t}_{0 \le t \le T} \mbox{ is of class {\rm (D)}} \quad \mbox{and }
\int_0^T \E{e^{- \gamma c_t}} dt < \infty.
$$
\end{definition}

Consider the BSDE
\be \label{BSDEexp}
Y_t = E + \int_t^T f(s,Y_s,Z_s)ds + \int_t^T Z_s dW_s
\ee
with driver
$$
f(t,y,z)
= - \frac{\gamma}{2} {\rm dist}^2_t \brak{z + \frac{1}{\gamma} \theta, h P}
+ z \theta_t + \frac{1}{2 \gamma} |\theta_t|^2
+ h(t) (e_t - y) + \frac{h(t)}{\gamma} \brak{\log \frac{h(t)}{\alpha} - 1}
+ \frac{\beta}{\gamma}.
$$
Since $\theta$, $e$, $E$ and $h$ are bounded and the set $P$ contains a bounded process $\bar{p}$,
there exists a constant $K \in \mathbb{R}_+$ such that
$$|f(t,y,z)| \le K(1+ |y| + |z|^2)$$
and
$$|f(t,y_1,z_1)-f(t,y_2,z_2)|\le K( |y_1-y_2|+(1+|z_1|+|z_2|)|z_1-z_2|).$$
So it follows from Kobylanski \cite{Kobylanski} that equation \eqref{BSDEexp} has a unique solution
$(Y,Z)$ such that $Y$ is bounded and from  Morlais \cite{Morlais} that $Z$
belongs to ${\cal P}^{1 \times n}_{\rm BMO}$.

\begin{theorem} \label{thmexp}
The optimal value of the optimization problem \eqref{opt} for
$u(x) = - \exp(-\gamma x)$ over all admissible strategies is
\be \label{optvalueexp}
- \exp \edg{- \gamma (h(0) x + Y_0)},
\ee
and $(c^*, p^*)$ is an optimal admissible strategy if and only if
\be \label{optstexp}
c^* = h X^{(c^*,p^*)} + Y - \frac{1}{\gamma} \log \frac{h}{\alpha}
\quad \mbox{$\nu \otimes \p$-a.e.} \quad \mbox{and} \quad
p^* \in \Pi_{P} \brak{\frac{Z + \theta/\gamma}{h}}.
\ee
In particular, an optimal admissible strategy exists, and it is unique up to
$\nu \otimes \p$-a.e. equality if the set $P$ is ${\cal P}$-convex.
\end{theorem}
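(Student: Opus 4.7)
Following the martingale optimality principle of Hu--Imkeller--M\"uller \cite{HIM}, I would associate to each admissible pair $(c,p)$ the process
\[
R^{(c,p)}_t \ := \ \int_0^t \alpha e^{-\beta s}u(c_s)\,ds \ - \ e^{-\beta t}\exp\brak{-\gamma\brak{h(t)X^{(c,p)}_t+Y_t}},
\]
where $(Y,Z)$ is the bounded solution of \eqref{BSDEexp} with $Z\in{\cal P}^{1\times n}_{\rm BMO}$. By construction $R^{(c,p)}_T$ is the random variable inside the expectation in \eqref{opt}, while $R^{(c,p)}_0 = -\exp[-\gamma(h(0)x+Y_0)]$. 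The plan is to prove that $R^{(c,p)}$ is a supermartingale for every admissible $(c,p)$ and a true martingale exactly when $(c,p)$ satisfies \eqref{optstexp}; the value \eqref{optvalueexp} and the characterization of optimizers then follow by taking expectations at $T$.

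\textbf{Drift computation.} I would apply It\^o's formula to $e^{-\beta t}\exp(-\gamma(hX+Y))$, substitute \eqref{XSDE} and \eqref{BSDEexp}, use the ODE $h'=h(h-r)$ to cancel the explicit $X$-terms, and plug in the driver $f$. Completing the square in $p$ through the identity
\[
\tfrac{\gamma^2}{2}\abs{hp}^2 - \gamma^2 h\,p\cdot Z - \gamma h\,p\cdot\theta + \tfrac{\gamma^2}{2}\abs{Z+\theta/\gamma}^2 \ = \ \tfrac{\gamma^2}{2}\abs{hp-(Z+\theta/\gamma)}^2
\]
rewrites the drift of $R^{(c,p)}$ as
\[
-e^{-\beta t}\exp\brak{-\gamma(hX+Y)}\edg{\phi(c-hX-Y) + \tfrac{\gamma^2}{2}\brak{\abs{hp-(Z+\theta/\gamma)}^2 - {\rm dist}^2(Z+\theta/\gamma,hP)}},
\]
with $\phi(\psi):=\alpha e^{-\gamma\psi}+\gamma h\psi+h(\log(h/\alpha)-1)$. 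A one-variable calculus check gives $\phi\ge 0$ with unique zero at $\psi^{*}=\gamma^{-1}\log(\alpha/h)$, and the second bracket is $\ge 0$ with equality exactly when $hp$ is a closest point of $hP$ to $Z+\theta/\gamma$. Hence the drift is $\le 0$, vanishing if and only if \eqref{optstexp} holds.

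\textbf{Supermartingale and upper bound.} Since the stochastic part of $R^{(c,p)}$ is an It\^o integral against $W$, $R^{(c,p)}$ is a local supermartingale. Admissibility and boundedness of $Y$ make $e^{-\beta t}\exp(-\gamma(hX+Y))$ of class (D), and $\int_0^T\E{e^{-\gamma c_t}}\,dt<\infty$ bounds the running-utility integral in $L^1$ uniformly in $t$, so $R^{(c,p)}$ is itself of class (D) and therefore a genuine supermartingale. Taking expectations at $T$ yields the upper bound \eqref{optvalueexp}.

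\textbf{Existence, attainment, and uniqueness (the main obstacle).} The delicate step is to show that the candidate $(c^{*},p^{*})$ given by \eqref{optstexp} is actually admissible, thereby attaining equality. Existence of a projection $p^{*}\in \Pi_P((Z+\theta/\gamma)/h)$ is provided by \cite{CKV}; the presence of the bounded $\bar p\in P$ yields $\abs{p^{*}}\le\abs{\bar p}+2\abs{(Z+\theta/\gamma)/h}$, placing $p^{*}$ in ${\cal P}^{1\times n}_{\rm BMO}$. Plugging $c^{*}=hX+Y-\gamma^{-1}\log(h/\alpha)$ into \eqref{XSDE} turns it into a linear SDE in $X$ with bounded coefficient $r-h$ on $X$ and a BMO integrand on $dW$, which admits a unique strong solution $X^{(c^{*},p^{*})}$ given by an explicit variation-of-constants formula. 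The class (D) property of $\exp(-\gamma h X^{(c^{*},p^{*})})$ and the integrability $\int_0^T\E{e^{-\gamma c^{*}_t}}\,dt<\infty$ are then verified through standard BMO energy inequalities (cf.\ Kazamaki \cite{Kaz}) applied to ${\cal E}(-\gamma h p^{*}\cdot W)$. With admissibility in hand, the drift of $R^{(c^{*},p^{*})}$ vanishes identically, so $R^{(c^{*},p^{*})}$ is a class (D) local martingale and hence a true martingale, producing $\E{R^{(c^{*},p^{*})}_T}=R_0$ and optimality. Uniqueness when $P$ is ${\cal P}$-convex is immediate: strict convexity of $\phi$ pins $c^{*}-hX-Y$ pointwise, and $\Pi_P$ is then single-valued up to $\nu\otimes\p$-a.e.\ equality by \cite{CKV}.
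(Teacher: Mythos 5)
Your overall architecture is exactly the paper's: the same utility process $R^{(c,p)}$ (up to the order of the two terms), the same completion of the square in $p$ against ${\rm dist}^2(Z+\theta/\gamma,hP)$, the same pointwise optimization in $c$, the supermartingale/martingale dichotomy, and the appeal to \cite{CKV} for existence and uniqueness of the projection. The one place where your argument has a genuine gap is the verification that the candidate $(c^*,p^*)$ is admissible, specifically the class (D) property of $\exp(-\gamma h(t)X^*_t)$ and the integrability $\int_0^T\E{e^{-\gamma c^*_t}}dt<\infty$. You assert these follow from ``standard BMO energy inequalities applied to ${\cal E}(-\gamma h p^*\cdot W)$,'' but that does not work as stated: writing $h(t)X^*_t$ as $V_t+(\mbox{bounded drift})$ with $V_t=\int_0^t h(s)p^*_s\,dW^{\q}_s$, one has
$$
e^{-\gamma V_t}={\cal E}(-\gamma hp^*\cdot W^{\q})_t\,
\exp\brak{\tfrac{\gamma^2}{2}\int_0^t h^2(s)|p^*_s|^2\,ds},
$$
and the second factor is the obstruction. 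The John--Nirenberg inequality only gives exponential moments of $\int_0^T|p^*_s|^2ds$ for exponents below $1/\N{p^*}^2_{\rm BMO}$, and nothing forces $\gamma^2\sup_t h^2(t)/2$ to lie below that threshold; reverse H\"older inequalities control the stochastic exponential alone, not this product. So class (D) cannot be extracted from BMO estimates on $p^*$ alone.

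The paper closes this gap with a different, and essential, chain of reasoning. Since $A^{(c^*,p^*)}=0$, the process $-R^*$ is a \emph{positive} local martingale, hence a supermartingale, and Fatou's lemma gives $\E{-R^*_T}<\infty$; this already yields both $\E{e^{-\gamma X^*_T}}<\infty$ and $\int_0^T\E{e^{-\gamma c^*_t}}dt<\infty$ \emph{before} admissibility is known. Class (D) is then obtained by conditional Jensen under $\q$ applied to the $\q$-martingale $V$, namely $e^{-\gamma V_\tau}\le\brak{\EQ{e^{-\gamma V_T/2}\mid{\cal F}_\tau}}^2$, followed by a change of measure back to $\p$ and Cauchy--Schwarz, which bounds $e^{-\gamma h(\tau)X^*_\tau}$ by a constant times $\E{e^{-\gamma X^*_T}\mid{\cal F}_\tau}$; the family of such conditional expectations is uniformly integrable because $e^{-\gamma X^*_T}\in L^1$. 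Note that this step uses only the boundedness of $\theta$ and the finiteness of $\E{e^{-\gamma X^*_T}}$, not any quantitative smallness of the BMO norm of $p^*$. You should replace your appeal to energy inequalities with this Fatou-plus-conditional-Jensen argument (or supply an alternative that genuinely controls the quadratic-variation exponential); the remainder of your proposal is sound and coincides with the paper's proof.
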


\begin{proof}
For every admissible strategy $(c,p)$, equation \eqref{XSDE} defines a
continuous stochastic process $X^{(c,p)}$. The process
$$
R^{(c,p)}_t = - e^{- \beta t} e^{-\gamma \brak{h(t) X^{(c,p)}_t + Y_t}}
- \int_0^t \alpha e^{- \beta s} e^{- \gamma c_s} ds
$$
satisfies
$$
R^{(c,p)}_0 = - e^{- \gamma (h(0) x + Y_0)}, \quad R^{(c,p)}_T =
- e^{- \beta T} e^{-\gamma \brak{X^{(c,p)}_T + E}}
- \int_0^T \alpha e^{- \beta s} e^{- \gamma c_s} ds
$$
and
$$
dR^{(c,p)}_t = \gamma e^{- \beta t} e^{- \gamma \brak{h(t) X^{(c,p)}_t + Y_t}}
\edg{(h(t) p_t - Z_t) dW_t +  A^{(c,p)}_t dt},
$$
where
\beas
A^{(c,p)}_t = && h(t) p_t \theta_t - \frac{\gamma}{2} |h(t) p_t - Z_t|^2 - f(t,Y_t,Z_t)\\
&& + h(t) (e_t - c_t) - \frac{\alpha}{\gamma} e^{\gamma \brak{h(t) X^{(c,p)}_t + Y_t}} e^{-\gamma c_t}
+ h'(t) X^{(c,p)}_t + h(t) r X^{(c,p)}_t + \frac{\beta}{\gamma}.
\eeas
First note that
\beas
&& h(t) p_t \theta_t - \frac{\gamma}{2} |h(t) p_t - Z_t|^2
= - \frac{\gamma}{2} \abs{h(t) p_t - \brak{Z_t + \frac{1}{\gamma} \theta_t}}^2
+ Z_t \theta_t + \frac{1}{2 \gamma} |\theta_t|^2\\
&\le& - \frac{\gamma}{2} \, {\rm dist}^2_t \, \brak{Z + \frac{1}{\gamma} \theta, h P}
+ Z_t \theta_t + \frac{1}{2 \gamma} |\theta_t|^2,
\eeas
and the inequality becomes a $\nu \otimes \p$-a.e. equality if and only if
$$
p \in \Pi_P \brak{\frac{Z + \theta/\gamma}{h}}.
$$
Furthermore, for fixed $(t,\omega) \in [0,T] \times \Omega$,
$$
z \mapsto - h(t) z - \frac{\alpha}{\gamma} e^{\gamma \brak{h(t) X^{(c,p)}_t + Y_t}} e^{-\gamma z}
$$
is a strictly concave function that is equal to its maximum
$$
\frac{h(t)}{\gamma} \log \frac{h(t)}{\alpha} - h^2(t) X^{(c,p)}_t - h(t) Y_t
- \frac{h(t)}{\gamma}
$$
if and only if
$$
z = h(t) X^{(c,p)}_t + Y_t - \frac{1}{\gamma} \log \frac{h(t)}{\alpha}.
$$
Therefore, one has
\bea
\notag
&& h(t) (e_t - c_t) - \frac{\alpha}{\gamma} e^{\gamma \brak{h(t) X^{(c,p)}_t + Y_t}} e^{-\gamma c_t}
+ h'(t) X^{(c,p)}_t + h(t) r X^{(c,p)}_t + \frac{\beta}{\gamma}\\
\notag
&\le& h(t) e_t + \frac{h(t)}{\gamma} \log \frac{h(t)}{\alpha} - h^2(t) X^{(c,p)}_t - h(t) Y_t
- \frac{h(t)}{\gamma}
+ h'(t) X^{(c,p)}_t + h(t) r X^{(c,p)}_t + \frac{\beta}{\gamma}\\
\label{disX}
&=& h(t) e_t + \frac{h(t)}{\gamma} \log \frac{h(t)}{\alpha} - h(t) Y_t - \frac{h(t)}{\gamma}
+ \frac{\beta}{\gamma},
\eea
where the inequality is attained if and only if
$$
c = h X^{(c,p)} + Y - \frac{1}{\gamma} \log \frac{h}{\alpha}
$$
(note that in \eqref{disX} the $X^{(c,p)}_t$-terms disappear due to our choice of the function $h$).
It follows that for every admissible pair $(c,p)$, $R^{(c,p)}$ is a local supermartingale,
which by our definition of admissible strategies, is of class (D).
Therefore, it is a supermartingale, and one obtains
$$
R^{(c,p)}_0 \ge \E{R^{(c,p)}_T},
$$
where the inequality is strict if the pair $(c,p)$ does not satisfy condition \eqref{optstexp}.
On the other hand, if we can show that each pair $(c^*,p^*)$ satisfying \eqref{optstexp} is admissible
and $R^* = R^{(c^*,p^*)}$ is a martingale, we can conclude that
$$
R^*_0 = \E{R^*_T},
$$
and it follows that $(c^*,p^*)$ is optimal.

But if $(c^*,p^*)$ satisfies \eqref{optstexp}, $c^*$ is continuous in $t$.
In particular, it belongs to ${\cal P}$ and $\int_0^T |c^*_t| dt < \infty$ $\p$-a.s.
Moreover, since $\theta$ as well as $h$ are bounded and $P$ contains a
bounded process $\bar{p}$, there exists a constant $L$ such that
$|p^*| \le L(1 + |Z|)$. It follows that $p^* \in {\cal P}^{1 \times n}_{\rm BMO}$,
and hence, $\int_0^T |p^*_t|^2 dt < \infty$ $\p$-a.s.
Since $A^* := A^{(c^*,p^*)} = 0$, $- R^*$ is a positive local martingale, and one obtains
$$
\E{e^{- \gamma X^*_T}} + \E{\int_0^T e^{- \gamma c^*_t} dt}
\le M \E{- R^*_T} < \infty,
$$
where $M$ is a suitable constant and the inequality $\E{- R^*_T} < \infty$ follows from
Fatou's lemma. By Girsanov's theorem,
$$
W^{\q}_t = W_t + \int_0^t \theta_s ds
$$
is an $n$-dimensional Brownian motion under the measure
$$
\q = {\cal E}(- \theta \cdot W)_T \cdot \p,
$$
and one has
\bea
\notag
&& d(h(t) X^*_t) = h'(t) X^*_t dt + h(t)p^*_t dW_t
+ h(t) [X^*_t r + p^*_t \theta_t + e_t-c^*_t] dt\\
\notag
&=& h'(t) X^*_t dt + h(t)p^*_t dW_t
+ h(t) \edg{X^*_t r + p^*_t \theta_t + e_t- h(t)X^*_t - Y_t +
\frac{1}{\gamma} \log \brak{\frac{h(t)}{\alpha}}} dt\\
\notag
&=& h(t)p^*_t dW_t
+ h(t) \edg{p^*_t \theta_t + e_t - Y_t + \frac{1}{\gamma} \log \brak{\frac{h(t)}{\alpha}}} dt\\
\label{hXbound}
&=& h(t) p^*_t dW^{\q}_t + h(t) \edg{e_t - Y_t
+ \frac{1}{\gamma} \log \brak{\frac{h(t)}{\alpha}}} dt.
\eea
Since $p^*$ belongs to ${\cal P}^{1 \times n}_{\rm BMO}$,
the process $V_t = \int_0^t h(s) p^*_s dW^{\q}_s$ is a BMO-martingale under $\q$, and
it can be seen from \eqref{hXbound} that there exist constants $d_1,d_2$ such that
$$
e^{-\gamma h(t) X^*_t} \le d_1 e^{- \gamma V_t} \quad \mbox{and} \quad
e^{- \gamma V_t} \le d_2 e^{-\gamma h(t) X^*_t} \quad
\mbox{for all } t \in [0,T].
$$
Hence, one obtains for every stopping time $\tau \le T$,
\beas
e^{-\gamma h(\tau) X^*_{\tau}} &\le& d_1 e^{-\gamma V_{\tau}}
\le d_1 \brak{\EQ{e^{- \frac{\gamma}{2} V_T} \mid {\cal F}_{\tau}}}^2\\
&=& d_1 \brak{\E{e^{- \frac{\gamma}{2} V_T} {\cal E}(-\theta \cdot W)_T \mid {\cal F}_{\tau}}}^2
{\cal E}(-\theta \cdot W)_{\tau}^{-2}\\
&\le& d_1 \E{e^{- \gamma V_T} \mid {\cal F}_{\tau}}
\E{{\cal E}(-\theta \cdot W)_T^2 \mid {\cal F}_{\tau}}
{\cal E}(-\theta \cdot W)_{\tau}^{-2}\\
&\le& d_1 d_2 \E{e^{-\gamma X^*_T} \mid {\cal F}_{\tau}}
\E{{\cal E}(-\theta \cdot W)_T^2 \mid {\cal F}_{\tau}}
{\cal E}(-\theta \cdot W)_{\tau}^{-2}.
\eeas
But since $\theta$ is bounded, there exists a constant $d_3$ such that
\beas
&& \E{{\cal E}(-\theta \cdot W)_T^2 \mid {\cal F}_{\tau}}
{\cal E}(-\theta \cdot W)_{\tau}^{-2}\\
&=& \E{\frac{{\cal E}(-2 \theta \cdot W)_T}{{\cal E}(- 2 \theta \cdot W)_{\tau}}
\exp \brak{\int_{\tau}^T |\theta_s|^2 ds} \mid {\cal F}_{\tau}}\\
&\le& d_3 \quad \mbox{for every stopping time } \tau \le T.
\eeas
So one has
$$
e^{-\gamma h(\tau) X^*_{\tau}} \le
d_1 d_2 d_3 \E{e^{- \gamma X^*_T} \mid {\cal F}_{\tau}}
\quad \mbox{for every stopping time } \tau \le T.
$$
This shows that $\exp\brak{- \gamma h(t) X^*_t}_{0 \le t \le T}$ is of class (D).
Therefore, $(c^*,p^*)$ is admissible and $R^*$ a martingale.

It remains to show that a pair $(c^*, p^*)$ satisfying \eqref{optstexp} exists and that it
is unique up to $\nu \otimes \p$-a.e. equality if the set $P$ is ${\cal P}$-convex. It is shown in \cite{CKV}
that a process $p^*$ in $\Pi_{P} \brak{\frac{Z + \theta/\gamma}{h}}$ exists and that it
is unique up to $\nu \otimes \p$-a.e. equality if $P$ is ${\cal P}$-convex.
As we have seen above, every $p^* \in \Pi_{P} \brak{\frac{Z + \theta/\gamma}{h}}$ is also in
${\cal P}^{1 \times n}_{\rm BMO}$. So there exists a unique continuous process $(X_t)$ satisfying
$$
X_t = x + \int_0^t X_s r ds +
\int_0^t p^*_t [dW_t + \theta_t dt]
+ \int_0^t \brak{e_s- h(s) X_s - Y_s + \frac{1}{\gamma} \log \frac{h(s)}{\alpha}} ds.
$$
But $X = X^{(c^*,p^*)}$ for
$$
c^*_t = h(t) X_t + Y_t - \frac{1}{\gamma} \log \frac{h(t)}{\alpha}.
$$
So $(c^*, p^*)$ satisfies condition \eqref{optstexp}, and it is unique
up to $\nu \otimes \p$-a.e. equality if the set $P$ is ${\cal P}$-convex.
\end{proof}

\setcounter{equation}{0}
\section{CRRA utility}
\label{sec:crra}

We now assume that the investor has constant relative risk aversion
$- x u''(x)/u'(x) = \delta > 0$. For $\delta = 1$, this corresponds to
$u(x) = \log(x)$, and for $\delta \neq 1$ to $u(x) = x^{\gamma}/\gamma$,
where $\gamma = 1 - \delta$. We discuss the cases $\delta = 1$ and
$\delta \neq 1$ separately. In both of them we assume $E = 0$.

We here suppose that the initial wealth is strictly positive: $x >0$.
To avoid $-\infty$ utility, the agent must keep the wealth process positive.
Therefore, we can parameterize $e$, $c$ and $\pi$ by
$\tilde{e} = e/X$, $\tilde{c} = c/X$ and $\tilde{\pi} = \pi/X$, respectively.
If one denotes $\tilde{p} = \tilde{\pi} \sigma$,
the corresponding wealth evolves according to
$$
\frac{d X^{(c,p)}_t}{X^{(c,p)}_t} = \tilde{p}_t(dW_t + \theta_t dt)
+ (r + \tilde{e}_t - \tilde{c}_t) dt, \quad X^{(c,p)}_0 = x,
$$
and one can write
\be \label{posX}
X^{(c,p)}_t =
x \, {\cal E} \brak{\tilde{p} \cdot W^{\q}}_t \exp\brak{\int_0^t( r
+ \tilde{e}_s - \tilde{c}_s) ds} > 0,
\ee
where ${\cal E}$ is the stochastic exponential and $W^{\q}_t = W_t +\int_0^t \theta_sds$.

In the whole section we assume that $\tilde{e}$ is bounded and the constraints are of the following form:
$\tilde{c}$ must be in the set $C$ and $\tilde{\pi}$ in $Q$, or equivalently,
$\tilde{p}$ in $P = \crl{\tilde{\pi} \sigma : \tilde{\pi} \in Q}$.
Additionally, $\tilde{c}$ will be required to be positive or non-negative
depending on the specific utility function being used. Also, $\tilde{c}$
and $\tilde{p}$ will have to satisfy suitable integrability conditions.
For all CRRA utility functions $u$ we make the following assumption:
\be \label{asscrra}
\mbox{there exists a pair }
(\bar{c}, \bar{p}) \in C \times P \mbox{ such that }
u(\bar{c}) - \bar{c} \mbox{ and } \bar{p} \mbox{ are bounded.}
\ee
Note that this implies that $u(\bar{c})$ and $\bar{c}$ are both bounded.

\subsection{Logarithmic utility}
\label{subsec:log}

In the case $u(x) = \log(x)$, we introduce the positive function
$$
h(t) = \left\{
\begin{array}{cl}
1+ \alpha (T-t) & \mbox{ if } \beta = 0\\
\alpha/\beta+(1-\alpha/\beta)e^{-\beta (T-t)}
& \mbox{ if } \beta > 0
\end{array}
\right. ,
$$
and notice that
$$h'(t)=\beta h(t)-\alpha \quad \mbox{with} \quad h(T)=1.$$

\begin{definition}
For $u(x) = \log(x)$, an admissible strategy is a pair $(\tilde{c},\tilde{p}) \in C \times P$
satisfying
\be \label{condlog}
\mathbb E\left[\int_0^T|\log( \tilde{c}_t)| dt+\int_0^T \tilde{c}_t dt+\int_0^T
|\tilde{p}_t|^2 dt\right] < \infty.
\ee
\end{definition}
Remember that we understand $\log(x)$ to be $-\infty$ for $x \le 0$. Therefore,
\eqref{condlog} implies $\tilde{c} > 0$ $\nu \otimes \p$-a.e. Let us set
\be \label{maxlog}
\max_{\tilde{c}\in C} \left(\frac{\alpha}{h} \log(\tilde{c})-\tilde{c}\right)
:= \mathop{\rm ess\,sup}_{\tilde{c} \in C}
\left(\frac{\alpha}{h} \log(\tilde{c})-\tilde{c}\right),
\ee
where ess\,sup is the smallest upper bound with respect to $\nu \otimes \p$-a.e.
inequality.  By
\be \label{argmaxlog}
\argmax_{\tilde{c}\in C} \left(\frac{\alpha}{h} \log(\tilde{c})-\tilde{c}\right)
\ee
we denote the set of all processes in $C$ which attain the ess\,sup. It follows
from Cheridito et al. \cite{CKV} that \eqref{argmaxlog} is not
empty and, up to $\nu\otimes\p$-a.e. equality, contains exactly one process if $C$ is ${\cal P}$-convex.
Note that
$$
\frac{\alpha}{h} \log(\bar{c})-\bar{c}
\le \max_{\tilde{c}\in C} \left(\frac{\alpha}{h} \log(\tilde{c})-\tilde{c}\right)
\le \frac{\alpha}{h} \brak{\log \frac{\alpha}{h} -1 },
$$
where $\bar{c}$ is the process of assumption \eqref{asscrra}. It follows that
$\max_{\tilde{c}\in C} \left(\frac{\alpha}{h} \log(\tilde{c})-\tilde{c}\right)$ as well as every
process $\tilde{c} \in \argmax_{\tilde{z}\in C} \left(\frac{\alpha}{h} \log(\tilde{z})-\tilde{z}\right)$
is bounded. In particular, $\log(\tilde{c})$ is bounded for every
$\tilde{c} \in \argmax_{\tilde{z}\in C} \left(\frac{\alpha}{h} \log(\tilde{z})-\tilde{z}\right)$.

Consider the BSDE
\be \label{BSDElog}
Y_t =  \int_t^T f(s,Y_s)ds + \int_t^T Z_s dW_s
\ee
with driver
\begin{equation}\label{driverlog}
f(t,y) =  \frac{1}{2} \, {\rm dist}^2_t \, \brak{\theta,P}
-\frac{1}{2}|\theta_t|^2- \frac{\alpha y}{h(t)}
- \max_{\tilde{c}\in C} \left(\frac{\alpha}{h} \log(\tilde{c})-\tilde{c}\right)_t
-r - \tilde{e}_t.
\end{equation}
$f(t,y)$ is of linear growth in $y$, and all the other
terms are bounded. It is known from Pardoux and Peng \cite{PP} that \eqref{BSDElog} has a unique solution $(Y,Z)$
such that $Y$ is square-integrable, and it follows from Morlais \cite{Morlais} that $Y$ is bounded and
$Z \in {\cal P}^{1 \times n}_{\rm BMO}$.

\begin{theorem} \label{thmlog}
For $u(x) = \log(x)$, the optimal value of the optimization problem
\eqref{opt} over all admissible strategies is
\be \label{optvaluelog}
h(0)(\log(x)-Y_0),
\ee
and $(\tilde{c}^*, \tilde{p}^*)$ is an optimal admissible strategy if and only if
\be
\label{optstlog}
\tilde{c}^* \in \argmax_{\tilde{c}\in C} \left(\frac{\alpha}{h} \log(\tilde{c})-\tilde{c}\right)
\quad \mbox{and} \quad \tilde{p}^* \in \Pi_{P} \brak{\theta}.
\ee
In particular, an optimal admissible  strategy exists, and it is unique
up to $\nu \otimes \p$-a.e. equality if the sets $C$ and $P$ are ${\cal P}$-convex.
\end{theorem}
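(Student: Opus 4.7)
The plan closely mirrors the proof of Theorem~\ref{thmexp}. For every admissible pair $(\tilde c, \tilde p)$, write $X := X^{(\tilde c, \tilde p)}$ and introduce the auxiliary process
$$
R^{(\tilde c, \tilde p)}_t = h(t)e^{-\beta t}\brak{\log X_t - Y_t} + \int_0^t \alpha e^{-\beta s}\log\brak{\tilde c_s X_s}\,ds.
$$
Since $h(T) = 1$, $Y_T = 0$, $X_0 = x$, and $c_s = \tilde c_s X_s$, we have $R^{(\tilde c,\tilde p)}_0 = h(0)(\log x - Y_0)$, while $R^{(\tilde c,\tilde p)}_T$ equals the total utility being maximized. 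The function $h$ (via $h' = \beta h - \alpha$) and the driver $f$ in \eqref{driverlog} are engineered so that $R^{(\tilde c,\tilde p)}$ will be a local supermartingale, and a local martingale precisely when \eqref{optstlog} holds.

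To verify this, I would apply It\^o's formula to $h(t)e^{-\beta t}(\log X_t - Y_t)$ using the SDE for $\log X$ coming from \eqref{posX}, the BSDE \eqref{BSDElog}, and the ODE $h' = \beta h - \alpha$, then add the running utility. The martingale part should come out to $h(t)e^{-\beta t}(\tilde p_t - Z_t)\,dW_t$, and after substituting \eqref{driverlog} the drift should collapse to
$$
e^{-\beta t} h(t)\crl{\edg{\frac{\alpha}{h(t)}\log\tilde c_t - \tilde c_t - \max_{\tilde c\in C}\brak{\frac{\alpha}{h}\log\tilde c - \tilde c}_t} + \edg{\frac{1}{2}\,{\rm dist}_t^2(\theta,P) - \frac{1}{2}|\tilde p_t - \theta_t|^2}}.
$$
Both bracketed quantities are $\le 0$ by the definitions of the ess\,sup in \eqref{maxlog} and of $\Pi_P(\theta)$, and they vanish simultaneously if and only if \eqref{optstlog} holds.

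To upgrade $R^{(\tilde c,\tilde p)}$ to a true supermartingale, note that by \eqref{condlog}, $\int_0^\cdot \tilde p_s\,dW_s$ is an $H^2$-martingale, so together with boundedness of $\theta, \tilde e, Y, h$ and $L^1$-integrability of $\tilde c$ and $|\log\tilde c|$, the explicit form \eqref{posX} will yield $\mathbb{E}[\sup_{t \le T}|\log X_t|] < \infty$ and hence $\mathbb{E}[\sup_{t \le T}|R^{(\tilde c,\tilde p)}_t|] < \infty$. This gives $h(0)(\log x - Y_0) = R^{(\tilde c,\tilde p)}_0 \ge \mathbb{E}[R^{(\tilde c,\tilde p)}_T]$, with strict inequality unless \eqref{optstlog} holds. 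For existence, the results from \cite{CKV} cited before the theorem provide $\tilde c^* \in \argmax_{\tilde c\in C}(\frac{\alpha}{h}\log\tilde c - \tilde c)$ and $\tilde p^* \in \Pi_P(\theta)$, both unique (up to $\nu \otimes \mathbb{P}$-a.e.\ equality) when $C$ and $P$ are ${\cal P}$-convex. Since $\tilde c^*$ and $\log \tilde c^*$ are bounded (by the remarks preceding the theorem) and $\tilde p^*$ is bounded because $|\tilde p^*-\theta|={\rm dist}(\theta,P)\le|\bar p-\theta|$ for the bounded $\bar p$ of \eqref{asscrra}, the pair $(\tilde c^*,\tilde p^*)$ is admissible, $\log X^*$ has moments of all orders, and the zero-drift local martingale $R^*$ is in fact a true martingale. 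Hence $R^*_0 = \mathbb{E}[R^*_T]$, which proves optimality, formula \eqref{optvaluelog}, and the claimed uniqueness.

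The main obstacle, relative to the CARA case, is that \eqref{condlog} is weaker than a class (D) hypothesis and provides no direct bound on $\tilde c$ or $\tilde p$; the passage from local supermartingale to supermartingale must therefore be done via the explicit representation \eqref{posX} rather than from the admissibility assumption itself. Once that integrability is in place, the martingale-optimality argument closes everything cleanly.
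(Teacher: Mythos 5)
Your proposal follows essentially the same route as the paper: the same utility process $R^{(c,p)}$, the same completion-of-squares collapse of the drift into the two nonpositive bracketed terms, the same local-supermartingale/supermartingale/martingale dichotomy, the same boundedness observations for $\tilde c^*$ and $\tilde p^*$ via assumption \eqref{asscrra}, and the same appeal to \cite{CKV} for existence and uniqueness. The only discrepancies are inconsequential: with the paper's BSDE convention $Y_t=\int_t^T f(s,Y_s)\,ds+\int_t^T Z_s\,dW_s$ the martingale part is $h(t)e^{-\beta t}(\tilde p_t+Z_t)\,dW_t$ rather than $(\tilde p_t-Z_t)\,dW_t$, and your upgrade to a true supermartingale via $\mathbb{E}\left[\sup_{t\le T}|R_t|\right]<\infty$ is just a repackaging of the paper's observation that the martingale part is a true martingale and the finite-variation part has integrable total variation under \eqref{condlog}.
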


\begin{proof}
For every admissible strategy $(\tilde{c},\tilde{p})$, define the process
$$
R^{(c,p)}_t = h(t) e^{- \beta t} \brak{\log \brak{X_t^{(c,p)}}-Y_t}
+\int_0^t \alpha e^{-\beta s} \log(c_s) ds.
$$
Then
$$
R^{(c,p)}_0 = h(0)(\log(x)-Y_0) , \quad R^{(c,p)}_T
= e^{- \beta T}  \log \brak{X_T^{(c,p)}} +\int_0^T \alpha e^{-\beta s} \log(c_s) ds
$$
and
\be \label{diffRlog}
dR^{(c,p)}_t =  h(t)e^{- \beta t}  \edg{( \tilde{p}_t + Z_t) dW_t +  A^{(c,p)}_t dt},
\ee
where
\beas
A^{(c,p)}_t = && \tilde{p}_t\theta_t-\frac{1}{2}|\tilde{p}_t|^2
+ \frac{\alpha Y_t}{h(t)} + f(t,Y_t)
+ \frac{\alpha}{h(t)} \log(\tilde{c}_t) +r + \tilde{e}_t - \tilde{c}_t.
\eeas
First note that
\beas
\tilde{p}_t\theta_t-\frac{1}{2}|\tilde{p}_t|^2+ \frac{\alpha Y_t}{h(t)}
= -\frac{1}{2}|\tilde{p}_t - \theta_t |^2+\frac{1}{2}|\theta_t|^2+ \frac{\alpha Y_t}{h(t)}
\le - \frac{1}{2} \, {\rm dist}^2_t \, \brak{\theta, P}+\frac{1}{2}|\theta_t|^2
+ \frac{\alpha Y_t}{h(t)},
\eeas
and the inequality becomes a $\nu \otimes \p$-a.e. equality if and only if
$$
\tilde{p} \in \Pi_{P} \brak{\theta}.
$$
Furthermore,
$$
\frac{\alpha}{h} \log(\tilde{c}) + r + \tilde{e} - \tilde{c}
\le \max_{\tilde{z} \in C} \left(\frac{\alpha}{h} \log(\tilde{z})-\tilde{z}\right) +r + \tilde{e},
$$
where $\nu \otimes \p$-a.e. equality is attained if and only if
$$
\tilde{c} \in \argmax_{\tilde{z} \in C}
\left(\frac{\alpha}{h} \log(\tilde{z})-\tilde{z}\right).
$$
It follows that for every admissible pair $(\tilde{c},\tilde{p})$,
the process $R^{(c,p)}$ is a local supermartingale. But it can be seen
from \eqref{diffRlog} that the local martingale part of $R^{(c,p)}$ is a true martingale
and its finite variation part is of integrable total variation. So $R^{(c,p)}$
is a supermartingale and one obtains
$$
R^{(c,p)}_0 \ge \E{R^{(c,p)}_T},
$$
where the inequality is strict if $(c,p)$ does not satisfy condition \eqref{optstlog}.
If $(\tilde{c}^*,\tilde{p}^*)$ satisfies \eqref{optstlog}, then the pair is in $C \times P$.
Moreover, we have seen above that it follows from assumption \eqref{asscrra} that the process
$\log(\tilde{c}^*)$ is bounded. The same is true for $\tilde{p}^*$ because
$\theta$ is bounded and, by assumption \eqref{asscrra}, $P$ contains a bounded process $\bar{p}$. In particular,
$(\tilde{c}^*,\tilde{p}^*)$ is admissible and the corresponding process $R^*$ a martingale.
One concludes
$$
R^*_0 = \E{R^*_T},
$$
which shows that $(\tilde{c}^*,\tilde{p}^*)$ is optimal. That a strategy satisfying
\eqref{optstlog} exists follows from \cite{CKV} as well as its uniqueness
(up to $\nu \otimes \p$-a.e. equality) in case $C$ and $P$ are ${\cal P}$-convex.
\end{proof}

\begin{example}
If consumption is unconstrained, that is $C = {\cal P}$, then
$$\tilde{c}^*= \frac{\alpha}{h},\quad
\max_{\tilde{c}\in C} \left(\frac{\alpha}{h}
\log(\tilde{c}) -\tilde{c}\right)= \frac{\alpha}{h}
\brak{\log \brak{\frac{\alpha}{h}}-1},$$
and the driver \eqref{driverlog} becomes
$$
f(t,y) =  \frac{1}{2} {\rm dist}^2_t \brak{\theta,P}
-\frac{1}{2}|\theta_t|^2- \frac{\alpha y}{h(t)}
- \frac{\alpha}{h(t)} \brak{\log \brak{\frac{\alpha}{h(t)}}-1} -r - \tilde{e}_t.
$$
\end{example}

\subsection{Power utility}
\label{subsec:power}

Let us now turn to the case $u(x) = x^{\gamma}/\gamma$ for
$\gamma \in (-\infty,0) \cup (0,1)$. The definition of admissible strategies is
slightly different for $\gamma > 0$ and $\gamma < 0$. But the
optimal value of the optimization problem \eqref{opt}
as well as the optimal strategies will in both cases be of the same form.

\begin{definition}
In the case $\gamma > 0$, an admissible strategy is a pair $(\tilde{c}, \tilde{p})
\in C \times P$ such that
$$
\tilde{c} \ge 0 \quad \nu \otimes \p \mbox{-a.e.} \quad \mbox{and} \quad
\int_0^T \tilde{c}_t dt + \int_0^T |\tilde{p}_t|^2 dt < \infty \quad
\mbox{$\p$-a.s.}
$$
For $\gamma < 0$, we additionally require the process
$(X^{(c,p)})^\gamma$ to be of class {\rm (D)} and
$\E{\int_0^T c^{\gamma}_t dt} < \infty$.
\end{definition}
Note that for $\gamma < 0$, since we assume $x^{\gamma}$ to be $\infty$ if $x \le 0$,
the condition $\E{\int_0^T c^{\gamma}_t dt} < \infty$ implies $\tilde{c} > 0$ $\nu \otimes \p$-a.e.

For every continuous bounded process $Y$, define
\be \label{maxpower}
\max_{\tilde{c}\in C} \left(\frac{\alpha}{\gamma}\tilde{c}^\gamma
e^{Y}-\tilde{c}\right) := \mathop{\rm ess\,sup}_{\tilde{c} \in C}
\left(\frac{\alpha}{\gamma}\tilde{c}^\gamma e^{Y}-\tilde{c}\right),
\ee
where ess\,sup denotes the smallest upper bound with respect to $\nu \otimes \p$-a.e. ordering, and
denote by
\be \label{argmaxpower}
\argmax_{\tilde{c}\in C} \left(\frac{\alpha}{\gamma}\tilde{c}^\gamma
e^{Y}-\tilde{c}\right)
\ee
the set of all processes in $C$ which attain the ess\,sup.
It follows from Cheridito et al. \cite{CKV} that \eqref{argmaxpower} is
not empty and, up to $\nu \otimes \p$-a.e. equality,
contains exactly one process if $C$ is ${\cal P}$-convex.
For the process $\bar{c}$ of assumption \eqref{asscrra}, one has
$$
\frac{\alpha}{\gamma}\bar{c}^\gamma e^{Y}-\bar{c} \le
\max_{\tilde{c}\in C} \left(\frac{\alpha}{\gamma}\tilde{c}^\gamma
e^{Y}-\tilde{c}\right) \le \frac{1-\gamma}{\gamma} \alpha^{1/(1-\gamma)} e^{Y/(1-\gamma)}.
$$
This implies that $\max_{\tilde{c} \in C} \left(\frac{\alpha}{\gamma}\tilde{c}^\gamma
e^{Y}-\tilde{c}\right)$ as well as $u(\tilde{c})$ and $\tilde{c}$ for
every $\tilde{c} \in \argmax_{\tilde{z}\in C} \left(\frac{\alpha}{\gamma}\tilde{z}^\gamma
e^{Y}-\tilde{z}\right)$, are bounded processes. Now consider the BSDE
\be \label{BSDEpower}
Y_t =  \int_t^T f(s,Y_s,Z_s)ds + \int_t^T Z_s dW_s
\ee
with driver
\begin{equation} \label{driverpower}
f(t,y,z)
= \gamma \left(\frac{1-\gamma}{2} \, {\rm dist}^2_t \, \brak{\frac{z+\theta}{1-\gamma}, P}
- \frac{|z+\theta_t|^2}{2(1-\gamma)} - \frac{1}{2\gamma}|z|^2
- \max_{\tilde{c}\in C} \left(\frac{\alpha}{\gamma}\tilde{c}^\gamma
e^{y}-\tilde{c}\right)_t - r - \tilde{e}_t + \frac{\beta}{\gamma}\right).
\end{equation}
Note that $f(t,y,z)$ grows exponentially in $y$.
But it satisfies Assumption (A.1) in Briand and Hu \cite{BH2}.
So it can be deduced from Proposition 3 in \cite{BH2} that \eqref{BSDEpower} has a solution
$(Y,Z)$ such that $Y$ is bounded. That $Z$ is in ${\cal P}^{1 \times n}_{\rm BMO}$
and the uniqueness of such a solution then follow from \cite{Morlais}.

\begin{theorem} \label{thmpower}
If $u(x) = x^{\gamma}/\gamma$ for $\gamma \in (-\infty,0) \cup (0,1)$, the
optimal value of the optimization problem \eqref{opt}
over all admissible strategies is
\be \label{optvaluepower}
\frac{1}{\gamma} x^{\gamma} e^{-Y_0},
\ee
and $(\tilde{c}^*, \tilde{p}^*)$ is an optimal admissible strategy if and only if
\be
\label{optstpower}
\tilde{c}^* \in \argmax_{\tilde{c} \in C}
\left(\frac{\alpha}{\gamma}\tilde{c}^\gamma e^{Y}-\tilde{c}\right)
\quad \mbox{and} \quad \tilde{p}^* \in \Pi_{P} \brak{\frac{Z + \theta}{1-\gamma}}.
\ee
In particular, an optimal admissible strategy exists, and it is unique
up to $\nu \otimes \p$-a.e. equality if the sets $C$ and $P$ are ${\cal P}$-convex.
\end{theorem}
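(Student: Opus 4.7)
The plan is to mirror the supermartingale-to-martingale argument of Theorems~\ref{thmexp} and~\ref{thmlog}, now with the multiplicative structure forced by the power shape of $u$. For an admissible pair $(\tilde c,\tilde p)$ I would introduce
$$
R^{(c,p)}_t = \frac{1}{\gamma}\,e^{-\beta t}\bigl(X^{(c,p)}_t\bigr)^{\gamma}e^{-Y_t}+\int_0^t\alpha\,e^{-\beta s}\frac{c_s^{\gamma}}{\gamma}\,ds,
$$
so that $R^{(c,p)}_0=\frac{1}{\gamma}x^{\gamma}e^{-Y_0}$ and $\E{R^{(c,p)}_T}$ is precisely the expected utility in \eqref{opt}. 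Applying It\^o's formula to $(X^{(c,p)})^{\gamma}e^{-Y}$ using \eqref{posX} together with $dY_t=-f(t,Y_t,Z_t)\,dt-Z_t\,dW_t$ yields
$$
dR^{(c,p)}_t = \frac{e^{-\beta t}(X^{(c,p)}_t)^{\gamma}e^{-Y_t}}{\gamma}\bigl[(\gamma\tilde p_t+Z_t)\,dW_t+A^{(c,p)}_t\,dt\bigr].
$$
Substituting the explicit driver \eqref{driverpower} into $A^{(c,p)}$ and completing the square in $\tilde p$ around $(Z+\theta)/(1-\gamma)$ cancels the $|Z+\theta|^2$ and $|Z|^2$ pieces and leaves
$$
A^{(c,p)}_t = -\tfrac{\gamma(1-\gamma)}{2}\Bigl[\bigl|\tilde p_t-\tfrac{Z_t+\theta_t}{1-\gamma}\bigr|^2-{\rm dist}^2_t\bigl(\tfrac{Z+\theta}{1-\gamma},P\bigr)\Bigr]+\gamma\Bigl[\tfrac{\alpha}{\gamma}\tilde c_t^{\gamma}e^{Y_t}-\tilde c_t-\max_{\tilde z\in C}\bigl(\tfrac{\alpha}{\gamma}\tilde z^{\gamma}e^{Y}-\tilde z\bigr)_t\Bigr].
$$

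The first bracket is $\ge 0$ (since ${\rm dist}$ is an infimum) and the second is $\le 0$ (since the max is an essential supremum over $C$), with equality in each iff the corresponding condition of \eqref{optstpower} holds. Tracking signs gives $A^{(c,p)}\le 0$ for $\gamma\in(0,1)$ and $A^{(c,p)}\ge 0$ for $\gamma<0$; in both cases multiplying by the prefactor $e^{-\beta t}(X^{(c,p)})^{\gamma}e^{-Y}/\gamma$, whose sign matches $\gamma$, yields drift $\le 0$, so $R^{(c,p)}$ is a local supermartingale with strict drift unless \eqref{optstpower} holds. I would upgrade to a true supermartingale by nonnegativity of $R^{(c,p)}$ when $\gamma>0$, and for $\gamma<0$ from the class-(D) property of $(X^{(c,p)})^{\gamma}$ together with $\E{\int_0^T c_s^{\gamma}\,ds}<\infty$ built into the admissibility definition. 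In either case $\E{R^{(c,p)}_T}\le R^{(c,p)}_0$, with strict inequality whenever \eqref{optstpower} fails.

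For the optimizer, existence of $\tilde c^*$ in \eqref{argmaxpower} and of $\tilde p^*\in\Pi_P\bigl((Z+\theta)/(1-\gamma)\bigr)$ follows from \cite{CKV}, with uniqueness when $C,P$ are ${\cal P}$-convex. Assumption \eqref{asscrra} combined with the bound displayed after \eqref{argmaxpower} forces $\tilde c^*$ and $u(\tilde c^*)$ to be bounded, while $|\tilde p^*|\le L(1+|Z|)$ (from boundedness of $\theta$ and of the $\bar p$ in \eqref{asscrra}) puts $\tilde p^*$ in ${\cal P}^{1\times n}_{\rm BMO}$. For $\gamma>0$ admissibility of $(\tilde c^*,\tilde p^*)$ is then immediate. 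For $\gamma<0$ I expect the main obstacle: verifying that $(X^{(c^*,p^*)})^{\gamma}$ is of class~(D) and that $\E{\int_0^T(c^*_s)^{\gamma}\,ds}<\infty$. I would handle this by the BMO/Girsanov recipe used at the end of the proof of Theorem~\ref{thmexp}: express $\log X^*$ from \eqref{posX} as a BMO stochastic integral plus a bounded drift under $\q={\cal E}(-\theta\cdot W)_T\cdot\p$, apply the energy inequality for $Z,\tilde p^*\in{\cal P}^{1\times n}_{\rm BMO}$, and use the bounds on $\tilde c^*,\tilde e,Y$ to control $\E{(X^*_\tau)^{\gamma}\mid{\cal F}_\tau}$ uniformly over stopping times $\tau\le T$.

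With admissibility in hand, $A^{(c^*,p^*)}\equiv 0$ shows that $R^*$ is a local martingale, and a further BMO estimate of the same type upgrades it to a true martingale on $[0,T]$. Therefore $\E{R^*_T}=R^*_0=\frac{1}{\gamma}x^{\gamma}e^{-Y_0}$, which identifies \eqref{optvaluepower} as the optimal value and $(\tilde c^*,\tilde p^*)$ as an optimal strategy; uniqueness under ${\cal P}$-convexity of $C$ and $P$ is inherited from that of the maximizer in \eqref{argmaxpower} and of the projection $\Pi_P\bigl((Z+\theta)/(1-\gamma)\bigr)$.
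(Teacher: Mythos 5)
Your decomposition is exactly the paper's: the same process $R^{(c,p)}$, the same It\^o computation (your $A^{(c,p)}$ is $\gamma$ times the paper's, which is consistent with your extra $1/\gamma$ in the prefactor), the same completion of the square around $(Z+\theta)/(1-\gamma)$, the same sign bookkeeping in the two regimes, and the same appeal to \cite{CKV} for existence/uniqueness of the maximizer and the projection. For $\gamma\in(0,1)$ your outline is sound, though "a further BMO estimate of the same type" is doing real work there: the paper's actual argument is the H\"older/Girsanov computation around \eqref{Xexp}, bounding $\E{(X^*_\tau)^{\gamma'}}$ uniformly over stopping times for some $\gamma'\in(\gamma,1)$ using $X^*_t\le M{\cal E}(\tilde p^*\cdot W^{\q})_t$ and the $\q$-martingale property of ${\cal E}(\tilde p^*\cdot W^{\q})$, which gives class (D) of $(X^*)^{\gamma}$ by de la Vall\'ee-Poussin.

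The genuine gap is in the $\gamma<0$ case, precisely at the point you flag as "the main obstacle." Your proposed remedy --- write $\log X^*$ as a BMO integral plus bounded drift and apply the energy inequality --- does not suffice. From \eqref{posX}, $(X^*)^{\gamma}$ equals a bounded factor times ${\cal E}(\tilde p^*\cdot W^{\q})^{\gamma}={\cal E}(\gamma\tilde p^*\cdot W^{\q})\exp\brak{\tfrac{\gamma^2-\gamma}{2}\int_0^\cdot|\tilde p^*_s|^2ds}$, and for $\gamma<0$ the exponential correction has a \emph{positive} coefficient $\tfrac{\gamma^2-\gamma}{2}$. The energy inequality (equivalently John--Nirenberg) controls $\E{\exp\brak{\lambda\int_\tau^T|\tilde p^*|^2}}$ only for $\lambda$ below a threshold determined by the BMO norm of $\tilde p^*$, so for large $|\gamma|$ or large BMO norm this route gives nothing. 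The paper circumvents this entirely: since $A^*\equiv 0$ and $\gamma<0$, the process $-R^*$ is a \emph{positive local martingale}, hence a supermartingale, and Fatou gives $\E{-R^*_T}<\infty$, i.e.\ the terminal integrability \eqref{X*est} and in particular $\E{{\cal E}(\tilde p^*\cdot W^{\q})_T^{\gamma}}<\infty$ for free. It then bootstraps from the terminal value to the whole process: conditional Jensen for the convex map $x\mapsto x^{\gamma/2}$ applied to the $\q$-martingale ${\cal E}(\tilde p^*\cdot W^{\q})$, followed by a change of measure and Cauchy--Schwarz (using boundedness of $\theta$), yields ${\cal E}(\tilde p^*\cdot W^{\q})_\tau^{\gamma}\le M\,\E{{\cal E}(\tilde p^*\cdot W^{\q})_T^{\gamma}\mid{\cal F}_\tau}$ uniformly over stopping times, which is the class (D) property. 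Without this "optimality implies integrability of the terminal value, then Jensen backwards" idea, your proof of admissibility of $(\tilde c^*,\tilde p^*)$ and of the martingale property of $R^*$ for $\gamma<0$ does not close.
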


\begin{proof}
For every admissible strategy $(\tilde{c},\tilde{p})$ define the process
$$
R^{(c,p)}_t = e^{- \beta t} \frac{1}{\gamma}
\brak{X_t^{(c,p)}}^\gamma e^{-Y_t}
+ \int_0^t \alpha e^{-\beta s} \frac{1}{\gamma}c_s^\gamma ds.
$$
Then
$$
R^{(c,p)}_0 = \frac{1}{\gamma} x^\gamma e^{-Y_0},
\quad R^{(c,p)}_T = e^{- \beta T} \frac{1}{\gamma}
\brak{X_T^{(c,p)}}^\gamma
+ \int_0^T \alpha e^{-\beta s}\frac{1}{\gamma}c_s^\gamma ds
$$
and
$$
dR^{(c,p)}_t =  e^{- \beta t}
\brak{X_t^{(c,p)}}^\gamma e^{-Y_t} \edg{\brak{\tilde{p}_t
+\frac{1}{\gamma}Z_t} dW_t +  A^{(c,p)}_t dt},
$$
where
\beas
A^{(c,p)}_t = && \tilde{p}_t (Z_t+\theta_t)+\frac{1}{2}(\gamma-1) |\tilde{p}_t|^2
+\frac{1}{2\gamma}|Z_t|^2+\frac{1}{\gamma} f(t,Y_t,Z_t)\\
&& + \frac{\alpha}{\gamma}\tilde{c}_t^\gamma e^{Y_t}
+ \tilde{e}_t - \tilde{c}_t + r - \frac{\beta}{\gamma}.
\eeas
First note that
\beas
& & \tilde{p}_t (Z_t+\theta_t) + \frac{1}{2}(\gamma-1) |\tilde{p}_t|^2 + \frac{1}{2\gamma}|Z_t|^2\\
&=& \frac{\gamma-1}{2} \left|\tilde{p}_t-\frac{Z_t + \theta_t}{1-\gamma} \right|^2
+\frac{1}{2(1-\gamma)}|Z_t+\theta_t|^2+\frac{1}{2\gamma}|Z_t|^2\\
&\le& \frac{\gamma-1}{2} {\rm dist}^2_t \brak{\frac{Z + \theta}{1-\gamma}, P}
+ \frac{1}{2(1-\gamma)}|Z_t+\theta_t|^2+\frac{1}{2\gamma}|Z_t|^2,
\eeas
and the inequality becomes a $\nu \otimes \p$-a.e. equality if and only if
$$
\tilde{p} \in \Pi_{P} \brak{\frac{Z + \theta}{1-\gamma}}.
$$
Furthermore,
$$
\frac{\alpha}{\gamma} \tilde{c}^\gamma e^{Y} + \tilde{e}
-\tilde{c} + r - \frac{\beta}{\gamma}\\
\le \max_{\tilde{z} \in C} \left(\frac{\alpha}{\gamma}\tilde{z}^\gamma e^{Y}
-\tilde{z}\right) + \tilde{e} +r - \frac{\beta}{\gamma},
$$
where $\nu \otimes \p$-a.e. equality is attained if and only if
$$
\tilde{c}\in \argmax_{\tilde{z} \in C} \left(\frac{\alpha}{\gamma}\tilde{z}^\gamma
e^{Y}-\tilde{z}\right).
$$
The next step of the proof is slightly different for the two cases $\gamma > 0$ and $\gamma< 0$.
Let us first assume $\gamma > 0$. Then for every admissible pair $(\tilde{c},\tilde{p})$,
the process $R^{(c,p)}$ is a positive local supermartingale,
and hence, a supermartingale. In particular,
\be \label{supg>0}
R^{(c,p)}_0 \ge \E{R^{(c,p)}_T}
\ee
with strict inequality if condition \eqref{optstpower} is violated.
Now let $(\tilde{c}^*,\tilde{p}^*)$ be a strategy satisfying \eqref{optstpower}.
Then, as we have seen above, $u(\tilde{c}^*)$ and $\tilde{c}^*$ are bounded processes.
In particular $\tilde{c}^* \ge 0$ $\nu \otimes \p$-a.e.
Moreover, since $\theta$ and $\bar{p}$ are bounded and $Z$ is in ${\cal P}^{1 \times n}_{\rm BMO}$,
it follows from
$$
\mbox{dist}\brak{\frac{Z + \theta}{1-\gamma}, \tilde{p}^*}
= \mbox{dist}\brak{\frac{Z + \theta}{1-\gamma}, P} \le \mbox{dist}
\brak{\frac{Z + \theta}{1-\gamma}, \bar{p}},
$$
that
$\mbox{dist} \brak{\frac{Z + \theta}{1-\gamma}, \bar{p}}$
is in ${\cal P}_{\rm BMO}$ and $\tilde{p}^*$ in ${\cal P}^{1 \times n}_{\rm BMO}$.
This shows that $(\tilde{c}^*,\tilde{p}^*)$ is admissible.
Furthermore, the corresponding process $X^*$ satisfies
\be \label{Xexp}
X^*_t = x \, {\cal E} \brak{\tilde{p}^* \cdot W^{\q}}_t \exp\brak{\int_0^t( r
+ \tilde{e}_s - \tilde{c}^*_s) ds} \le M {\cal E} \brak{\tilde{p}^* \cdot W^{\q}}_t
\ee
for some constant $M \in \mathbb{R}_+$. Choose $\gamma < \gamma' <1$ and
let $\varepsilon = 1 - \gamma'$. Since $\theta$ is bounded, one has
$\EQ{{\cal E}\brak{\theta \cdot W^{\q}}_T^{1/\varepsilon}} < \infty$,
and by H\"older's inequality, one obtains for every stopping time $\tau \le T$,
\beas
&& \E{(X^*_{\tau})^{\gamma'}}
= \EQ{(X^*_{\tau})^{\gamma'} {\cal E}(\theta \cdot W^{\q})_T}\\
&\le& \EQ{(X^*_{\tau})}^{\gamma'}
\EQ{{\cal E} \brak{\theta \cdot W^{\q}}_T^{1/\varepsilon}}^{\varepsilon}
\le M^{\gamma'} \EQ{{\cal E}\brak{\theta \cdot W^{\q}}_T^{1/\varepsilon}}^{\varepsilon}.
\eeas
It follows that $(X^*)^{\gamma}$ is of class (D) and $R^*$ a martingale.
In particular,
$$
R^*_0 = \E{R^*_T},
$$
which shows that $(\tilde{c}^*, \tilde{p}^*)$ is optimal.

If $\gamma < 0$, $R^{(c,p)}$ is for every admissible pair $(\tilde{c},\tilde{p})$
a supermartingale due to our assumption that $(X^{(c,p)})^{\gamma}$ is of class (D)
and $\E{\int_0^T c^{\gamma}_t dt} < \infty$. So again,
$$
R^{(c,p)}_0 \ge \E{R^{(c,p)}_T}
$$
with strict inequality if $(\tilde{c},\tilde{p})$ does not fulfill condition \eqref{optstpower}.
If $(\tilde{c}^*,\tilde{p}^*)$ satisfies \eqref{optstpower}, it follows as in the case $\gamma > 0$, that
$u(\tilde{c}^*)$ and $\tilde{c}^*$ are bounded and
$\tilde{p}^*$ belongs to ${\cal P}^{1 \times n}_{\rm BMO}$.
In particular, $\tilde{c}^* > 0$ $\nu \otimes \p$-a.e.
Moreover, $-R^*$ is a positive local martingale. So
$-R^*$ is a supermartingale and $\mathbb E[-R_T^*] < \infty$.
Hence,
\be \label{X*est}
\E{(X^*_T)^\gamma + \int_0^T (c^*_t)^{\gamma} dt} < \infty,
\ee
and it follows that $(\tilde{c}^*, \tilde{p}^*)$ is admissible.
It can be seen from \eqref{posX} that \eqref{X*est} implies
$$
\E{{\cal E} (\tilde{p}^* \cdot W^{\q})_T^{\gamma}} < \infty,
$$
where $W^{\q}_t = W_t + \int_0^t \theta_s ds$.
So one obtains from Jensen's inequality that for every stopping time $\tau \le T$,
\beas
&& {\cal E}(\tilde{p}^* \cdot W^{\mathbb Q})_{\tau}^{\gamma}
\le \brak{\EQ{{\cal E}(\tilde{p}^*\cdot
W^{\mathbb Q})_T^{\gamma/2}|{\cal F}_{\tau}}}^2\\
&=& \brak{\E{{\cal E}(\tilde{p}^*\cdot W^{\mathbb Q})_T^{\gamma/2}
\frac{{\cal E}(-\theta\cdot W)_T}{{\cal E}(-\theta\cdot W)_{\tau}}|{\cal F}_\tau}}^2\\
&\le& \E{{\cal E}(\tilde{p}^*\cdot W^{\mathbb Q})_T^{{\gamma}}|{\cal F}_{\tau}}
\E{\frac{{\cal E}(-\theta\cdot W)^2_T}{{\cal E}(-\theta \cdot W)_{\tau}^2}|{\cal F}_{\tau}}\\
&\le& M \E{{\cal E}(\tilde{p}^*\cdot W^{\mathbb Q})_T^{{\gamma}} |{\cal F}_{\tau}}
\eeas
for some constant $M \in \mathbb{R}_+$. This shows that
${\cal E} (\tilde{p}^* \cdot W^{\q})^{\gamma}$ and $(X^*)^\gamma$ are of class (D).
Hence, $(\tilde{c}^*, \tilde{p}^*)$ is admissible and $R^*$ is a martingale.
In particular, $R^*_0 = \E{R^*_T}$, and it follows that
$(\tilde{c}^*,\tilde{p}^*)$ is optimal.

In both cases, $\gamma > 0$ and $\gamma < 0$, existence of an optimal strategy
follows from \cite{CKV} as does uniqueness (up to $\nu \otimes \p$-a.e. equality)
if the sets $C$ and $P$ are ${\cal P}$-convex.
\end{proof}

\begin{example}
If consumption is unconstrained, that is $C = {\cal P}$, then
$$c^* = \alpha^{1/(1-\gamma)} e^{Y_t/(1-\gamma)}, \quad
\max_{\tilde{c}\in C} \left(\frac{\alpha}{\gamma}\tilde{c}^\gamma
e^{y}-\tilde{c}\right)
=\frac{1-\gamma}{\gamma} \alpha^{1/(1-\gamma)} e^{y/(1-\gamma)},$$
and the driver \eqref{driverpower} becomes
$$
f(t,y,z)
= \gamma\left(\frac{1-\gamma}{2} {\rm dist}^2_t \brak{ \frac{z+\theta}{1-\gamma}, P}
- \frac{|z+\theta_t|^2}{2(1-\gamma)} - \frac{1}{2\gamma}|z|^2
- \frac{1-\gamma}{\gamma} \alpha^{1/(1-\gamma)} e^{y/(1-\gamma)}
-r - \tilde{e}_t + \frac{\beta}{\gamma}\right).
$$
\end{example}

\setcounter{equation}{0}
\section{Conclusion}
\label{sec:conclusion}

We gave solutions to optimal consumption and investment problems for expected
utility optimizers in the three cases of exponential, logarithmic and power
utility. In the exponential case we assumed that
the income rate $e$ as well as the final payment $E$ were bounded and consumption
$c$ was unconstrained. The proof of Theorem \ref{thmexp} relies on the fact that the
BSDE \eqref{BSDEexp} has a unique solution $(Y,Z)$ such that $Y$ is bounded and
$Z$ is in ${\cal P}^{1 \times n}_{\rm BMO}$.
There exist extensions of the result of Kobylanski \cite{Kobylanski}
showing that equation \eqref{BSDEexp} also has a unique solution for certain unbounded
random variables $E$; see Briand and Hu \cite{BH1,BH2}, Ankirchner et al.
\cite{AIP}, Delbaen et al. \cite{DH2}. However, if $E$ is not bounded, $Y$ is not bounded
and $Z$ not necessarily in ${\cal P}^{1 \times n}_{\rm BMO}$. It is still possible to show that there exist
strategies satisfying condition \eqref{optstexp}. But one would need a new argument to show
that and in which sense they are optimal.
If one introduces restrictions on consumption, the proof of Theorem \ref{thmexp} does not go through.
If one can show existence of an optimal strategy when consumption is
constrained, it obviously has to be different from \eqref{optstexp}.
In the cases of logarithmic and power utility we assumed $\tilde{e} = e/X$ to be bounded
and $E = 0$. The first assumption is technical and ensures that the BSDEs \eqref{BSDElog} and
\eqref{BSDEpower} both have unique solutions $(Y,Z)$ such that $Y$ is bounded and $Z$ in
${\cal P}^{1 \times n}_{\rm BMO}$. Again, if \eqref{BSDElog} or \eqref{BSDEpower} can
be solved for unbounded $\tilde{e}$, one can still show that there exist strategies satisfying
\eqref{optstlog} or \eqref{optstpower}, respectively. But then again, one would have to find
a new explanation why and in which sense they are optimal. For $E \neq 0$ the proofs of
Theorems \ref{thmlog} and \ref{thmpower} do not work because the process
$R^{(c,p)}$ does not have the correct terminal value. One would have to find a new way
to construct $R^{(c,p)}$ to cover this case. We point out that also in Sections 3 and 4 of
Hu et al. \cite{HIM} as well as in Nutz \cite{Nutz} it is assumed that $E = 0$.

\end{document}